\newtheorem{theorem}{Theorem}[section]
\newtheorem{lemma}[theorem]{Lemma}
\newtheorem{corollary}[theorem]{Corollary}
\theoremstyle{definition}
\newtheorem{definition}[theorem]{Definition}
\theoremstyle{remark}
\newtheorem{remark}[theorem]{Remark}
\numberwithin{equation}{section}
\newcommand{\field}[1]{\mathbb{#1}}
\newcommand{\R}{\field{R}}
\newcommand{\N}{\field{N}}
\newcommand{\al}{\alpha}
\newcommand{\de}{\delta}
\newcommand{\ep}{\varepsilon}
\newcommand{\la}{\lambda}
\newcommand{\ph}{\varphi}
\newcommand{\ps}{\psi}
\newcommand{\om}{\omega}
\newcommand{\Om}{\Omega}
\newcommand{\dist}{\mathop{\text{\upshape{dist}}}\nolimits}
\newcommand{\ca}{\mathop{\text{\upshape{ca}}}\nolimits}
\newcommand\mn[1]{#1}
\begin{document}

	\title[Continuity from below of monotone functionals on $C_b$]{Lower semicontinuity of monotone functionals in the mixed topology on $C_b$}
	
	\author{Max Nendel}
	\address{Center for Mathematical Economics, Bielefeld University}
	\email{max.nendel@uni-bielefeld.de}

	\thanks{The author thanks Ben Goldys, Michael Kupper, Markus Kunze, as well as two anonymous referees for helpful comments related to this work.\ Financial support through the Deutsche Forschungsgemeinschaft (DFG, German Research Foundation) -- SFB 1283/2 2021 -- 317210226 and the Australian Research Council Discovery Project DP120101886 is gratefully acknowledged.}
	
	\date{\today}

	\begin{abstract}
	The main result of this paper characterizes the continuity from below of monotone functionals on the space $C_b$ of bounded continuous functions on an arbitrary Polish space as lower semicontinuity in the mixed topology.\ In this particular situation, the mixed topology coincides with the Mackey topology for the dual pair $(C_b,\ca)$, where $\ca$ denotes the space of all countably additive signed Borel measures of finite variation.\ Hence, lower semicontinuity in the mixed topology of convex monotone maps $C_b\to \R$ is equivalent to a dual representation in terms of countably additive measures.\ Such representations are of fundamental importance in finance, e.g., in the context of risk measures and super hedging problems.\ \mn{Based on the main result, regularity properties of capacities and dual representations of Choquet integrals in terms of countably additive measures for $2$-alternating capacities are studied.\ In a second step,} the paper provides a characterization of equicontinuity in the mixed topology for families of convex monotone maps.\ As a consequence, for every convex monotone map on $C_b$ taking values in a locally convex vector lattice, continuity in the mixed topology is equivalent to continuity on norm bounded sets.\smallskip\\
	\noindent \emph{Key words:} Risk measure, monotone functional, Choquet integral, continuity from below, lower semicontinuity, mixed topology, Mackey topology\	\smallskip\\
		\noindent \emph{AMS 2020 Subject Classification:} Primary 91G70; 46A20; Secondary 28A12
	\end{abstract}

	\maketitle
	
	\setcounter{tocdepth}{1}

\vspace{1em} 
\section{Introduction}

In this paper, we study continuity properties for monotone maps $C_b\to \R$, where $C_b=C_b(\Om)$ denotes the space of all bounded continuous functions on a Polish space $\Om$ with values in $\R$. Monotone functionals $C_b\to \R$ appear in many applications.\ Special instances of such maps, in the context of \mn{finance and actuarial science}, are
\begin{itemize}
	\item risk measures or nonlinear expectations, cf.\ \cite{MR2754968,MR3859905,MR3970247},
	\item super hedging functionals, cf.\ \cite{MR4238277,MR4289666},
	\item robust expected utilities or loss functions, cf.\ \cite{delbaen1,delbaen2},
	\item Choquet integrals, e.g., in the context of insurance premia, cf.\ \cite{MR1604936}.
\end{itemize}	
In order to obtain dual representations of convex monotone functionals in terms of countably additive measures, additional continuity properties are usually required. The two most prominent continuity properties in this context are continuity from above and continuity from below, cf.\ \cite{MR3859905}. For convex monotone functionals, continuity from below is usually a weaker requirement than continuity from above, see, for instance, \cite{MR4289666}.\

\mn{In the field of mathematical finance, these continuity properties have been studied in many contexts in the past decades.\ For risk measures, continuity from above is (up to a different sign convention) related to the Lebesgue property, whereas continuity from below is closely tied to the Fatou property.\ The Fatou property or, more generally, Fatou closedness is a fundamental ingredient in no-arbitrage theory, see, e.g., \cite{MR4135699} and \cite{MR4370806}.\ Fixing a reference measure and working on $L^\infty$, it is well-known that continuity from below of convex risk measures is equivalent to a dual representation in terms of countably additive measures, cf.\ \cite[Theorem 4.33]{MR3859905} and, if the risk measure is also law-invariant, continuity from below is automatically satisfied if the underlying probability space is assumed to be atomless, cf.\ \cite{MR2277714}.

On the other hand, monetary risk measures are closely linked to nonlinear expectations and the topic of model uncertainty in finance. In this context, risk measures, which are not dominated by a single probability measure that deems events to be negligible or not, play a crucial role.\ An example for such a risk measure is the $G$-expectation, cf.\ \cite{MR2754968} and \cite{MR3970247}.\ Working, however, on the space $B_b$ of bounded measurable functions without fixing a reference probability, continuity from below alone is \textit{not} sufficient in order to guarantee a dual representation of convex monotone functionals in terms of countably additive measures despite the fact that it implies sequential lower semicontinuity of such functionals in the weak topology $\sigma(B_b,\ca)$ of the dual pair $(B_b,\ca)$.\ In \cite[Example 3.6]{denk2018kolmogorov}, an example for a coherent risk measure, which is continuous from below on $B_b$ but does not have a single countably additive minorant, is given. On the other hand, continuity from above of a risk measure on the space of bounded measurable functions already implies the existence of a dominating reference measure, see, e.g., \cite[Remark 3.3]{denk2018kolmogorov}.

One way out of this dilemma is to restrict the attention to continuous claims.\ Choosing the space $C_b$ of bounded continuous functions as an underlying function space, it is well-known that continuity from above, for a convex monotone functional $C_b\to \R$, is a sufficient but not necessary condition for a dual representation in terms of countably additive measures, see, e.g., \cite{MR4289666}.\ However, the question whether such a representation is equivalent to the weaker notion of continuity from below on general Polish state spaces has been unanswered for almost a decade, as discussed in the introduction of \cite{delbaen1}. In a series of papers \cite{delbaen1,delbaen2}, this question has been answered positively and, as a consequence, convex monotone functionals on $C_b$, which are continuous from below, are lower semicontinuous in the Mackey topology $\mu(C_b,\ca)$ of the dual pair $(C_b,\ca)$, where $\ca$ denotes the space of all countably additive signed Borel measures with finite variation.\ From a mathematical perspective, this is a remarkable result, since the Mackey topology $\mu(C_b,\ca)$ is not metrizable and continuity from below is a requirement for sequences, so that nonmetrizability poses a major problem.\ Therefore, in \cite{delbaen1,delbaen2}, another path is chosen, and the proofs therein rely on compactification methods, more precisely, on the fact that every Polish space can be embedded as a $G_\delta$ into a compact metric space.}

The present paper generalizes the main result of \cite{delbaen2} and proves that, for \textit{any} monotone functional $C_b\to \R$, continuity from below is equivalent to lower semicontinuity in the Mackey topology $\mu(C_b,\ca)$, which is also known as the mixed topology in this setting, cf.\ Theorem \ref{thm.main}.\ Moreover, Theorem \ref{thm.main} shows that, for monotone functionals $C_b\to \R$, upper or lower semicontinuity in the mixed topology are equivalent to sequential upper or lower semicontinuity in the mixed topology, respectively, despite the fact that the mixed topology is not metrizable.\ \mn{Since the Mackey topology $\mu(C_b,\ca)$ is the finest topology leading to the dual space $\ca$ of countably additive signed Borel measures with finite variation, it is a natural choice for duality theory on $C_b$. In particular, Theorem \ref{thm.main} implies that every convex monotone functional on $C_b$ admits a dual representation in terms of countably additive measures, cf.\ Corollary \ref{cor.convex}.\ Using, however, the explicit representation of a local base of the origin of the mixed topology allows to further characterize continuity from below in terms of proximity on compact sets, see Theorem \ref{thm.main}.}

\mn{In Corollary \ref{cor.choquet}, we turn our focus on capacities and Choquet integrals. In a first step, we characterize the regularity of general capacities, defined on open sets, in terms of continuity from below of the Choquet integral on the set $L_b$ of all bounded lower semicontinuous functions $\Om\to \R$ and the capacity along sequences of open sets. In a second step, we characterize $2$-alternating capacities, for which the related Choquet integral admits a dual representation in terms of countably additive measures on $L_b$, in terms of continuity from below along sequences of open sets and regularity of the capacity, partially extending the results from \cite{MR0444879}.}

Another question that we address in this paper is a characterization of continuity of convex monotone maps in the mixed topology.\ Continuity in the mixed topology is of fundamental importance in many situations in robust finance.\ In the context of super hedging, it has been studied in \cite{MR4238277}.\ In the context of dynamic risk measures and semigroups related to stochastic processes under model uncertainty, it has appeared in \cite{nend2,nend1} and, with a different language, it is also used in the analysis of large deviations principles based on max-stable risk measures, cf.\ \cite{MR4255224}.\ Building on Theorem \ref{thm.main}, we discuss the equicontinuity of families of convex monotone maps in the mixed topology in Theorem \ref{thm.supremum}.\ There, a characterization of equicontinuity in terms of continuity from above and uniform equicontinuity on supremum norm bounded sets is given.

\mn{From a mathematical perspective, the mixed topology has two striking features.\ On the one hand, unless $\Om$ is compact, it has \textit{no} neighborhood of zero, which is bounded with respect to the supremum norm.\ On the other hand, it has the intrinsic property that, for linear operators taking values in an arbitrary locally convex space, continuity is equivalent to continuity on supremum norm bounded sets.\ Corollary \ref{cor.lcs}, which is a consequence of Theorem \ref{thm.supremum}, extends this intrinsic property showing that, for convex monotone maps taking values in a locally convex vector lattice, $\mu(C_b,\ca)$-continuity is equivalent to $\mu(C_b,\ca)$-continuity on supremum norm bounded sets.\ A particular instance of a locally convex vector lattice is $C_b$ itself, endowed with the mixed topology, which, in a financial context, corresponds, for example, to the case of conditional risk measures or conditional nonlinear expectations.}

The rest of the paper is organized as follows.\ In Section \ref{sec.main}, we state the main results and their corollaries.\ Section \ref{sec.proofs} contains all proofs.\ In the Appendix \ref{app.capacity}, we prove an auxiliary result for capacities and Choquet integrals and, in the Appendix \ref{app.lcs}, we prove an auxiliary result on locally convex vector lattices.  

\section{Main results}\label{sec.main}
 Throughout, let $\Om$ be a Polish space and $C_b=C_b(\Om)$ denote the space of all bounded continuous functions $\Om\to \R$. We consider the local base $$\mathcal V^2:=\big\{\{g\in C_b\, |\, \|g\|_\infty<r\}\, \big|\, r>0\big\}$$ of $0\in C_b$ for the topology induced by the supremum norm $\|\cdot\|_\infty$ and the local base
 $$
 \mathcal V^1:=\bigg\{\Big\{g\in C_b\, \Big|\, \sup_{x\in C}|g(x)|<r\Big\}\, \bigg|\, r>0, \, C\subset \Omega\text{ compact}\bigg\}
 $$
  of $0\in C_b$ for the vector topology of uniform convergence on compacts. Let $\mathcal V$ denote the system  consisting of all sets
  \[
  \bigcup_{n\in \N}\sum_{k=1}^n (V_k^1\cap kV^2)
  \]
 with $(V_k^1)_{k\in \N}\subset \mathcal V^1$ and $V^2\in \mathcal V^2$, where $kV^2:=\{kg\, |\, g\in V^2\}$ for all $k\in \N$ and $$\sum_{k=1}^n V_k:=\bigg\{\sum_{k=1}^ng_k\, \bigg|\, g_1\in V_1,\ldots, g_n\in V_n\bigg\}$$ for nonempty subsets $V_1,\ldots, V_n$ of $C_b$ and $n\in \N$.
 
  Then, $\mathcal V$ is a local base of $0\in C_b$ for a Hausdorff locally convex topology $\beta$, which is known as the \textit{mixed topology}.\ We refer to \cite{wiweger} for a detailed discussion on the mixed topology in a more general setting.\ Clearly, the mixed topology $\beta$ is finer than the \textit{weak topology} $\sigma(C_b,\ca)$ of the dual pair $(C_b,\ca)$, where $\ca$ denotes the space of all countably additive signed Borel measures of finite variation.\ A well-known fact, which we will \textit{not} make use of, is that the mixed topology $\beta$ coincides with the Mackey topology of the dual pair $(C_b,\ca)$.\ Moreover, $\beta$ belongs to the class of \textit{strict topologies}, cf.\ \cite{wheeler1}.\ We also refer to \cite{haydon} and \cite{sentilles} for additional fine properties of mixed or strict topologies.
  
  We say that a functional $U\colon C_b\to \R$ is \textit{monotone} if $U(f)\leq U(g)$ for all $f,g\in C_b$ with $f\leq g$, where, for functions $\Omega\to \R$, the relation $\leq$ and all other order-related objects refer to the pointwise order.
  
  For a sequence $(f_n)_{n\in \N}\subset C_b$ and $f\colon \Om\to \R$, we write $f_n\nearrow f$ as $n\to \infty$ if $f_n\leq f_{n+1}$ for all $n\in \N$ and $f(x)=\lim_{n\to \infty} f_n(x)$ for all $x\in \Om$.\ Analogously, we write  $f_n\searrow f$ as $n\to \infty$ if $f_n\geq f_{n+1}$ for all $n\in \N$ and $f(x)=\lim_{n\to \infty} f_n(x)$ for $(f_n)_{n\in \N}\subset C_b$ and $f\in C_b$.
  
  \begin{definition}\
  	\begin{enumerate}
    \item[a)] We say that a monotone functional $U\colon C_b\to \R$ is \textit{continuous from below} if $U(f)=\lim_{n\to \infty}U(f_n)$ for all sequences $(f_n)_{n\in \N}\subset C_b$ and $f\in C_b$ with $f_n\nearrow f$ as $n\to \infty$.
    \item[b)] We say that a monotone functional $U\colon C_b\to \R$ is \textit{continuous from above} if $U(f)=\lim_{n\to \infty}U(f_n)$ for all sequences $(f_n)_{n\in \N}\subset C_b$ and $f\in C_b$ with $f_n\searrow f$ as $n\to \infty$.
  \end{enumerate}
\end{definition}

\begin{theorem}\label{thm.main}
	Let $U\colon C_b\to \R$ be monotone.\ Then, the following statements are equivalent.
	\begin{enumerate}
		\item[(i)] $U$ is continuous from below.
		\item[(ii)] $U$ is lower semicontinuous in the mixed topology $\beta$.
		\item[(iii)] $U$ is sequentially lower semicontinuous in the mixed topology $\beta$.
        \mn{\item[(iv)] For all $f\in C_b$, $\ep>0$, and $r\geq 0$, there exist $\de>0$ and a compact $C\subset \Omega$ such that
        \[
        U(f)\leq U(f+e)+\ep
        \]
        for all $e\in C_b$ with $\|e\|_\infty\leq r$ and $\sup_{x\in C}|e(x)|\leq \de$.}
	\end{enumerate}
\end{theorem}

Considering, for an arbitrary monotone function $U\colon C_b\to \R$, its conjugate function $\overline U\colon C_b\to \R$, given by $\overline U(f):=-U(-f)$ for all $f\in C_b$, we obtain the following corollary.
\begin{corollary}\label{cor.main}
	Let $U\colon C_b\to \R$ be monotone.\ Then, the following statements are equivalent.
	\begin{enumerate}
		\item[(i)] $U$ is continuous from above.
		\item[(ii)] $U$ is upper semicontinuous in the mixed topology $\beta$.
		\item[(iii)] $U$ is sequentially upper semicontinuous in the mixed topology $\beta$.
        \mn{\item[(iv)] For all $f\in C_b$, $\ep>0$, and $r\geq 0$, there exist $\de>0$ and a compact $C\subset \Omega$ such that
        \[
        U(f+e)\leq U(f)+\ep
        \]
        for all $e\in C_b$ with $\|e\|_\infty\leq r$ and $\sup_{x\in C}|e(x)|\leq \de$.}
	\end{enumerate}
\end{corollary}
A combination of Theorem \ref{thm.main} and Corollary \ref{cor.main} leads to the following characterization of continuity in the mixed topology for monotone functionals.
\begin{corollary}\label{cor.main.cont}
	Let $U\colon C_b\to \R$ be monotone.\ Then, the following statements are equivalent.
	\begin{enumerate}
		\item[(i)] $U$ is continuous from above and below.
		\item[(ii)] $U$ is continuous in the mixed topology $\beta$.
		\item[(iii)] $U$ is sequentially continuous in the mixed topology $\beta$.
        \mn{\item[(iv)] For all $f\in C_b$, $\ep>0$, and $r\geq 0$, there exist $\de>0$ and a compact $C\subset \Omega$ such that
        \[
        |U(f+e)-U(f)|\leq \ep
        \]
        for all $e\in C_b$ with $\|e\|_\infty\leq r$ and $\sup_{x\in C}|e(x)|\leq \de$.}
	\end{enumerate}
\end{corollary}

As a direct consequence of Theorem \ref{thm.main}, we obtain the main result in \cite{delbaen2}.\ We denote by $\ca_+$ the set of all positive elements of $\ca$, i.e., the set of all finite Borel measures.

\begin{corollary}\label{cor.convex}
	Let $U\colon C_b\to \R$ be convex and monotone.\ Then, the following are equivalent.
	\begin{enumerate}
		\item[(i)] $U$ is continuous from below.
		\item[(ii)] $U$ is lower semicontinuous in the mixed topology $\beta$.
		\item[(iii)] $U$ is lower semicontinuous in the weak topology $\sigma(C_b,\ca)$.
		\item[(iv)] There exists a nonempty set $\mathcal M\subset \ca_+$ and a function $\alpha\colon \mathcal M\to \R$ such that
		\begin{equation}\label{eq.dual}
			U(f)=\sup_{\mu\in \mathcal M} \bigg(\int_\Omega f\, {\rm d}\mu-\alpha(\mu)\bigg)\quad \text{for all }f\in C_b.
		\end{equation}
	\end{enumerate}	
\end{corollary}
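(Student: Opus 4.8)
The plan is to derive Corollary \ref{cor.convex} from Theorem \ref{thm.main} together with standard convex-duality arguments. The equivalence (i) $\Leftrightarrow$ (ii) is literally Theorem \ref{thm.main}, since a convex monotone functional is in particular monotone. The implication (iv) $\Rightarrow$ (iii) is immediate: each map $f\mapsto \int f\,\mathrm d\mu - \alpha(\mu)$ is $\sigma(C_b,\ca)$-continuous (by definition of the weak topology), hence $\sigma(C_b,\ca)$-lower semicontinuous, and a pointwise supremum of lower semicontinuous functions is lower semicontinuous. The implication (iii) $\Rightarrow$ (ii) is also immediate because the mixed topology $\beta$ is finer than $\sigma(C_b,\ca)$, as already noted in the text, so $\sigma(C_b,\ca)$-lower semicontinuity implies $\beta$-lower semicontinuity. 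Thus the only substantial implication is (ii) $\Rightarrow$ (iv), and it is here that convexity is genuinely used.

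For (ii) $\Rightarrow$ (iv) I would invoke the Fenchel--Moreau theorem. Since $U$ is convex and $\beta$-lower semicontinuous, and since $\beta$ is a locally convex topology on $C_b$ whose topological dual is exactly $\ca$ (the Mackey-topology identification mentioned in the text — although one can also argue more directly, see below), the biconjugate of $U$ taken with respect to the dual pairing $(C_b,\ca)$ equals $U$ itself:
\begin{equation*}
	U(f) = U^{**}(f) = \sup_{\mu\in \ca}\Big(\int f\,\mathrm d\mu - U^*(\mu)\Big),\qquad U^*(\mu) = \sup_{g\in C_b}\Big(\int g\,\mathrm d\mu - U(g)\Big).
\end{equation*}
It then remains to discard those $\mu$ that are not positive and to restrict to the effective domain of $U^*$. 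Monotonicity of $U$ forces $U^*(\mu) = +\infty$ whenever $\mu\notin\ca_+$: if $\mu$ is not positive, there is $h\in C_b$ with $h\le 0$ but $\int h\,\mathrm d\mu > 0$ (by a standard argument using regularity of $\mu$ and Urysohn's lemma on the Polish space $\Om$), and then monotonicity gives $U(th)\le U(0)$ for all $t\ge 0$, so $\int (th)\,\mathrm d\mu - U(th) \ge t\int h\,\mathrm d\mu - U(0) \to +\infty$. Hence the supremum in $U^{**}$ may be restricted to $\mathcal M := \{\mu\in\ca_+ : U^*(\mu) < \infty\}$, and setting $\alpha := U^*|_{\mathcal M}$ yields \eqref{eq.dual}, provided $\mathcal M\ne\es$; but $\mathcal M = \es$ would force $U \equiv -\infty$ (the supremum over the empty set), contradicting that $U$ is real-valued, so $\mathcal M$ is nonempty.

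The one point requiring a little care — and what I expect to be the main obstacle — is justifying the Fenchel--Moreau duality in the precise form used, namely that $\beta$-lower semicontinuity of a convex $U$ implies $U = U^{**}$ with the biconjugate computed against $\ca$. The clean way to do this without invoking the (deliberately unused) Mackey-topology identification is: $\beta$-lower semicontinuity plus convexity means the epigraph of $U$ is convex and $\beta$-closed, and since $\beta$ is a locally convex topology, the epigraph is then closed in the $\beta$-topology on $C_b\times\R$; one then needs to know that the continuous linear functionals on $(C_b,\beta)$ are exactly integration against elements of $\ca$. This last fact is precisely the content of the strict-topology / mixed-topology duality theory (e.g.\ \cite{wheeler1}, \cite{wiweger}), so strictly speaking one does appeal to it here, even though Theorem \ref{thm.main} itself does not. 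Alternatively, and more in the spirit of the paper, one can avoid this entirely by first proving (ii) $\Leftrightarrow$ (iii) directly: (iii) $\Rightarrow$ (ii) is trivial, and for (ii) $\Rightarrow$ (iii) one uses that a convex set which is closed in the finer locally convex topology $\beta$ having the same continuous linear functionals as $\sigma(C_b,\ca)$ is also $\sigma(C_b,\ca)$-closed — but this again hinges on the dual of $(C_b,\beta)$ being $\ca$. Since the paper's abstract and introduction freely acknowledge the Mackey identification as a known fact (merely not needed for Theorem \ref{thm.main}), I would state this duality as a known input, cite \cite{wheeler1}, and then the argument above goes through cleanly; the genuinely new content, the equivalence with continuity from below, is supplied by Theorem \ref{thm.main}.
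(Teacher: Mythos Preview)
Your proof is correct and follows the same Fenchel--Moreau skeleton as the paper, but it differs in one key input. You identify $(C_b,\beta)^*=\ca$ by citing the Mackey/strict-topology literature, and you flag this as the main external ingredient. The paper instead bootstraps the only piece it actually needs---that the \emph{positive} $\beta$-continuous linear functionals are exactly $\ca_+$---from Theorem~\ref{thm.main} itself: a positive linear functional $\la$ is monotone, so by Theorem~\ref{thm.main} it is $\beta$-lower semicontinuous (equivalently $\beta$-continuous, being linear) if and only if it is continuous from below, and by the Daniell--Stone theorem this holds if and only if $\la$ is integration against some $\mu\in\ca_+$. Combined with your monotonicity argument that non-positive functionals have $U^*=+\infty$, this suffices for the dual representation without ever touching the full identification $(C_b,\beta)^*=\ca$. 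The payoff is that the corollary stays self-contained relative to Theorem~\ref{thm.main}, in line with the paper's explicit remark that the Mackey identification is \emph{not} used. A minor side remark: your elimination of non-positive $\mu$ via regularity and Urysohn is more work than needed---if $\la$ is a non-positive linear functional on $C_b$, by definition there is $g\in C_b$ with $g\ge 0$ and $\la(g)<0$, and $h:=-g$ already does the job.
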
	

We apply Corollary \ref{cor.convex} to the case of Choquet integrals. In the sequel, let $\mathcal O$ denote the set of all open subsets of $\Omega$, i.e., the topology on $\Omega$, and $L_b$ the set of all bounded lower semicontinuous functions $\Om\to \R$.\ A \textit{capacity (on $\mathcal O$)} is a map $c\colon \mathcal O\to [0,\infty)$ with 
$$c(\emptyset)=0\quad\text{and}\quad c(B_1)\leq c(B_2)\quad\text{for all }B_1,B_2\in \mathcal O\text{ with }B_1\subset B_2.$$
For a capacity $c\colon \mathcal O\to [0,\infty)$, we define the \textit{Choquet integral} with respect to $c$ as
\[
\int_\Om f\, {\rm d}c:=\int_0^{\infty} c(f>s)\, {\rm d}s+\int_{-\infty}^0 \big(c(f>s)-c(\Om)\big)\, {\rm d}s\quad \text{for all }f\in L_b.
\]
By definition, the Choquet integral is \textit{positively homogeneous}, i.e., $\int_\Om (\lambda f)\,{\rm d}c=\lambda\int_\Om f\,{\rm d}c$ for all $f\in L_b$ and $\lambda>0$, and \textit{constant additive}, i.e., $\int_\Om ( f+m)\,{\rm d}c=\int_\Om f\,{\rm d}c+mc(\Om)$ for all $f\in L_b$ and $m\in \R$. A well-known fact is that the Choquet integral is \textit{subadditive}, i.e.,  $\int_\Om (f_1+f_2)\,{\rm d}c\leq \int_\Om f_1\,{\rm d}c+\int_\Om f_2\,{\rm d}c$ for all $f_1,f_2\in L_b$, if and only if the capacity $c$ is \textit{$2$-alternating}, i.e.,
\[
c(B_1\cup B_2)+c(B_1\cap B_2)\leq c(B_1)+c(B_2)\quad\text{for all }B_1,B_2\in \mathcal O.
\]
For the reader's convenience, we provide a proof of this statement in the Appendix \ref{app.capacity}.

Another consequence of Theorem \ref{thm.main} and Corollary \ref{cor.convex} is the following result concerning the regularity of general capacities and dual representations of Choquet integrals in terms of countably additive measures for $2$-alternating capacities.
\begin{corollary}\label{cor.choquet}
	Let $c\colon \mathcal O\to [0,\infty)$ be a capacity.\ Then, the following statements are equivalent.
	\begin{enumerate}
		\item[(i)] For every sequence $(B_n)_{n\in \N}\subset \mathcal O$ with $B_n\subset B_{n+1}$ for all $n\in \N$,
		\[
		 c\bigg(\bigcup_{n\in \N}B_n\bigg)=\lim_{n\to\infty}c(B_n).
		\]
        \item[(ii)] The Choquet integral is continuous from below on $L_b$, i.e., 
        \[
            \int_\Omega f\,{\rm d} c=\lim_{n\to \infty} \int_\Omega f_n\,{\rm d} c
        \]
        for all $f\in L_b$ and any sequence $(f_n)_{n\in \N}\subset L_b$ with $f_n\nearrow f$ as $n\to \infty$.
        \item[(iii)] The capacity $c$ is regular, i.e., for all $B\in \mathcal O$,
        \begin{equation}\label{eq.regularity}
        c(B)=\sup_{C\Subset B}\inf_{\substack{A\in \mathcal O\\ A\supset C}}c(A).\footnote{\text{\mn{Here and throughout, we write $C\Subset B$ if $C\subset B\subset \Omega$ and $C$ is compact.}}}
        \end{equation}
        \end{enumerate}	
        If $c$ is $2$-alternating, the statements (i) - (iii) are equivalent to the following statement:
        \begin{enumerate}
		\item[(iv)] There exists a nonempty set $\mathcal M\subset \ca_+$ with $\mu(\Om)=c(\Om)$ for all $\mu\in \mathcal M$ and
		\[
		\int_\Omega f\, {\rm d}c=\sup_{\mu\in \mathcal M}\int_\Omega f\, {\rm d}\mu\quad\text{for all }f\in L_b.
		\]
	\end{enumerate}	
\end{corollary}

We conclude this section with various characterizations of continuity in the mixed topology for convex monotone maps. We start with the following theorem, which is the second main result.

\begin{theorem}\label{thm.supremum}
  Let $I$ be a nonempty index set and $(U_i)_{i\in I}$ be a family of convex and monotone maps $C_b\to \R$ with
   \begin{equation}\label{eq.ass.thm.supremum}
   \sup_{i\in I} \big(U_i(r)-\mn{U_i(0)}\big)<\infty\quad \text{for all constants }r\geq 0.
   \end{equation}
Then, the following statements are equivalent.
	\begin{enumerate}
		\item[(i)] For every sequence $(f_n)_{n\in \N}\subset C_b$ with $f_n\searrow 0$ as $n\to \infty$,
		\[
		\lim_{n\to \infty}\sup_{i\in I} \big(U_i(f_n)- U_i(0)\big)=0.
		\]
		\item[(ii)] For every $r\geq 0$ and every $\ep >0$, there exists some $V\in \mathcal V$ with 
		\[
		\sup_{\|f\|_\infty\leq r}\sup_{i\in I} |U_i(f+e)-U_i(f)|<\ep \quad \text{for all }e\in V.
		\]
		\item[(iii)] For every $r\geq 0$ and every $\ep >0$, there exists a compact $C\subset \Omega$ and a constant $M\geq 0$ such that
		\[
		\sup_{i\in I}|U_i(f_1)-U_i(f_2)|\leq M\sup_{x\in C}|f_1(x)-f_2(x)|+\ep
		\] 
		for all $f_1,f_2\in C_b$ with $\max\big\{\|f_1\|_\infty,\|f_2\|_\infty\big\}\leq r$.
	\end{enumerate}	
\end{theorem}

Theorem \ref{thm.supremum} leads to the following characterization for continuity of convex and monotone maps on $C_b$ taking values in a \textit{locally convex vector lattice} $(L,\tau)$, i.e., a vector lattice $L$ together with a locally convex topology $\tau$ on $L$, which is generated by a family of lattice seminorms.\ Recall that, for a vector lattice $L$, a seminorm $p\colon L\to [0,\infty)$ is called a \textit{lattice seminorm} if $p(u)\leq p(v)$ for all $u,v\in L$ with $|u|\leq |v|$.\ We refer to \cite{MR1741419} for a detailed study of locally convex vector lattices. For the reader's convenience, we provide an auxiliary result on locally convex vector lattices in the Appendix \ref{app.lcs}.

\begin{corollary}\label{thm.lcs}
	Let $(L,\tau)$ be a locally convex vector lattice and $U\colon C_b\to L$ be convex and monotone.\ Then, the following statements are equivalent.
	\begin{enumerate}
		\item[(i)] $U$ is $\beta$-$\tau$-continuous.
		\item[(ii)] For every nonnegative $\tau$-continuous linear functional $\lambda\colon L\to \R$ and every sequence $(f_n)_{n \in \N}\subset C_b$ with $f_n\searrow 0$ as $n\to \infty$,
		\[
		\lim_{n\to \infty}\lambda \big(U(f_n)\big)=\lambda\big(U(0)\big).
		\]
		\item[(iii)] For every nonnegative $\tau$-continuous linear functional $\lambda\colon L\to \R$, the map
		\[
		 C_b\to \R, \; f\mapsto \lambda\big(U(f)\big)
		\]
		 is sequentially upper semicontinuous in the mixed topology $\beta$.
		 \item[(iv)] For every $\tau$-continuous lattice seminorm $p\colon L\to [0,\infty)$, every constant $r\geq 0$, and all $\ep >0$, there exists some $V\in \mathcal V$ with 
		 \[
		 \sup_{\|f\|_\infty\leq r} p\big(U_i(f+e)-U_i(f)\big)<\ep \quad \text{for all }e\in V.
		 \]
		\item[(v)] For every $\tau$-continuous lattice seminorm $p\colon L\to [0,\infty)$, every constant $r\geq 0$, and all $\ep >0$, there exists a compact $C\subset \Omega$ and a constant $M\geq 0$ such that
		\[
		p\big(U(f_1)-U(f_2)\big)\leq M\sup_{x\in C}|f_1(x)-f_2(x)|+\ep
		\] 
		for all $f_1,f_2\in C_b$ with $\max\big\{\|f_1\|_\infty,\|f_2\|_\infty\big\}\leq r$.
	\end{enumerate}	
\end{corollary}

Choosing $L=\R$ with the topology induced by the absolute value $|\cdot|$, we obtain the following corollary.
\begin{corollary}\label{cor.cont.convex}
	Let  $U\colon C_b\to \R$ be convex and monotone.\ Then, the following statements are equivalent.
	\begin{enumerate}
		\item[(i)] $U$ is continuous from above.
		\item[(ii)] $U$ is sequentially upper semicontinuous in the mixed topology $\beta$.
		\item[(iii)] $U$ is continuous in the mixed topology $\beta$.
		\item[(iv)] For every $r\geq 0$ and every $\ep >0$, there exists some $V\in \mathcal V$ with 
		\[
		\sup_{\|f\|_\infty\leq r} |U(f+e)-U(f)|<\ep \quad \text{for all }e\in V.
		\]
		\item[(v)] For every $r\geq 0$ and every $\ep >0$, there exists a compact $C\subset \Omega$ and a constant $M\geq 0$ such that
		\[
		|U(f_1)-U(f_2)|\leq M\sup_{x\in C}|f_1(x)-f_2(x)|+\ep
		\] 
		for all $f_1,f_2\in C_b$ with $\max\big\{\|f_1\|_\infty,\|f_2\|_\infty\big\}\leq r$.
	\end{enumerate}	
\end{corollary}	

Now, let $\Om_0$  be another Polish space, $C_b(\Om_0)$ the space of all bounded continuous functions $\Om_0\to \R$, and $\beta_0$ denote the mixed topology on $C_b(\Om_0)$.\ In order to avoid confusion, we write $C_b(\Om)$ instead of $C_b$ in the following corollary.%\ Choosing $(L,\tau)=(C_b(\Om_0),\beta_0)$, we obtain the following corollary.
\begin{corollary}\label{cor.lcs}
Let $U\colon C_b(\Om)\to C_b(\Om_0)$ be convex and monotone.\ Then, the following statements are equivalent.
\begin{enumerate}
	\item[(i)] $U$ is $\beta$-$\beta_0$-continuous.
	\item[(ii)] For every $\om\in \Omega_0$ and every sequence $(f_n)_{n \in \N}\subset C_b(\Om)$ with $f_n\searrow 0$ as $n\to \infty$,
	\[
	\lim_{n\to \infty}\big(U(f_n)\big)(\om)=\big(U(0)\big)(\om).
	\]
	\item[(iii)] For every compact $K\subset \Omega_0$, every constant $r\geq 0$, and all $\ep >0$, there exists a compact $C\subset \Omega$ and a constant $M\geq 0$ such that
	\[
	\sup_{\om\in K}\big|\big(U(f_1)\big)(\om)-\big(U(f_2)\big)(\om)\big|\leq M\sup_{x\in C}|f_1(x)-f_2(x)|+\ep
	\] 
	for all $f_1,f_2\in C_b(\Om)$ with $\max\big\{\|f_1\|_\infty,\|f_2\|_\infty\big\}\leq r$.
\end{enumerate}	
\end{corollary}	

\section{Proofs}\label{sec.proofs}

\mn{
Before turning our focus on the proof of Theorem \ref{thm.main}, we collect some well-known facts on the connection between pointwise monotone convergence, uniform convergence on compacts together with uniform boundedness, and convergence in the mixed topology for sequences in $C_b$.

\begin{remark}\label{rem.convergenceprop}\
\begin{enumerate}
\item[a)] Let $(f_n)_{n\in \N}\subset C_b$ with $f_n\nearrow f\in C_b$ as $n\to \infty$.\ Since $f_n\nearrow f$ as $n\to \infty$, it follows that $(f_n-f)_{n\in \N}$ is uniformly bounded, i.e., 
\begin{equation}\label{eq.unifbddnes}
\sup_{n\in \N}\|f_n-f\|_\infty<\infty,
\end{equation}
and, by Dini's lemma, $(f_n-f)_{n\in \N}$ converges uniformly on compacts to $0\in C_b$, i.e., for all compacts $C\subset \Om$,
\begin{equation}\label{eq.unifconvcompacts}
\sup_{x\in C}|f_n(x)-f(x)|\to 0\quad\text{as }n\to \infty.
\end{equation}
\item[b)] Let $(f_n)_{n\in \N}\subset C_b$ and $f\in C_b$ with \eqref{eq.unifbddnes} and \eqref{eq.unifconvcompacts}. Moreover, let $V\in \mathcal V$. Then, there exist $(V_k^1)_{k\in \N}\subset \mathcal V^1$ and $V^2\in \mathcal V^2$ such that $$V=\bigcup_{n\in \N}\sum_{k=1}^n (V_k^1\cap k V^2).$$
By \eqref{eq.unifbddnes}, there exists some $k\in \N$ with $f_n-f\in kV^2$ for all $n\in \N$. Moreover, by \eqref{eq.unifconvcompacts}, there exists some $n_0\in \N$ such that $f_n-f\in V_k^1$ for all $n\in \N$ with $n\geq n_0$. Hence, $f_n-f\in V_k^1\cap kV^2\subset V$ for all $n\in \N$ with $n\geq n_0$. This shows that $f_n\to f$ as $n\to \infty$ in the mixed topology $\beta$.
\end{enumerate}
\end{remark}}

\begin{proof}[Proof of Theorem \ref{thm.main}]
	\mn{Let $d\colon \Om\times \Om\to [0,\infty)$ be a metric that is consistent with the topology on $\Om$ such that $(\Om,d)$ is a complete separable metric space.\ Moreover, let $(x_i)_{i\in \N}\subset \Om$ be a sequence such that $D:=\{x_i \, |\, i\in \N\}$ is dense in $\Om$.
	
	We start with the proof of the implication (i)$\, \Rightarrow\,$(ii). To that end, assume that $U$ is continuous from below.\ We show that $U$ is lower semicontinuous in the mixed topology $\beta$, i.e., for all $f\in C_b$ and $\ep>0$, there exists some $V\in \mathcal V$ such that
    \[
    U(f)\leq U(f+e)+\ep\quad\text{for all }e\in V.
    \]
    In order to do so, fix $f\in C_b$ and $\ep>0$. Since $U$ is continuous from below, there exists some $\de>0$ such that
	\[
	U(f)\leq U(f-\de)+\frac\ep2.
	\]
    In a first step, we adapt the main idea from the proof of Ulam's theorem, cf.\ \cite[Proof of Theorem 7.1.4]{dudley} to our setting, and recursively construct families of finite sets $D_k^m\subset D$ and continuous functions $\ph_k^m\colon \Om\to [0,1]$, which are indexed by $k,m\in \N$ with $k\leq m$ and satisfy the following three properties:
    \begin{enumerate}
    \item[(P1)] For all $k,m\in \N$ with $k\leq m$,
    \[
    D_k^m\subset D_k^{m+1}\subset \bigcap_{j=k}^m \bigcup_{x\in D_k^j} B\Big(x,\tfrac1j\Big)=:B_k^m.
    \]
    \item[(P2)] For all $k,m\in \N$ with $k\leq m$ and $x\in \Omega \setminus B_k^m$, it holds $\ph_k^m(x)=0$.
    \item[(P3)] For all $m\in \N$, \[U(f-\delta)\leq U\bigg(f-\de -\frac{m(m+1)}{2}+\sum_{k=1}^m k\ph_k^{m}\bigg)+\frac{\ep}2.\]
    \end{enumerate}
  In order to simplify notation, let $l(0):=1$ and set
  \[
f_m:=f-\de-\frac{m(m+1)}{2},\quad \ph_m^{m-1}:=1,\quad\text{and}\quad B^{m-1}_m:=\Omega\quad\text{for all }m\in \N.
  \]
  Now, let $m\in \N$ and define
  $$
  \ps_{k}^{l,m}(x):=\Big(1-m\dist\big(x,\{x_1,\ldots, x_{l}\}\cap B_k^{m-1}\big)\Big)\vee 0
  $$
  for all $x\in \Omega$, $k\in \{1,\ldots, m\}$, and $l\in \N$.\ Since $B_k^{m-1}$ is open, $\ph_k^{m-1}\geq 0$, and $\ph_k^m(x)=0$ for all $x\in \Om\setminus B_k^{m-1}$ for all $k\in \{1,\ldots, m\}$, it follows that
  \[
   \ph_k^{m-1} \ps_{k}^{l,m}\nearrow \ph_k^{m-1}\quad\text{as }l\to \infty
  \]  
  for all $k\in \{1,\ldots, m\}$.\ Using the continuity from below of $U$, there exists some $l(m)\in \N$ with $l(m)\geq l(m-1)$ and
   \[
  U\bigg(f_m+\sum_{k=1}^m k\ph_k^{m-1}\bigg)\leq U\bigg(f_m+\sum_{k=1}^m k\ph_k^{m-1}\psi_{k}^{l(m),m}\bigg)+\frac\ep2 2^{-m},
  \]
  and we define
  \begin{equation}\label{eq.def.sequence}
  \ph_k^m:=\ph_k^{m-1}\ps_{k}^{l(m),m} \quad\text{and}\quad  D_k^m:=\{x_1,\ldots, x_{l(m)}\}\cap B_k^{m-1}.
  \end{equation}
  
  We verify that the sequence constructed this way satisfies the properties (P1) - (P3). By definition, $D_k^m\subset B_k^{m-1}$ and since
  $$
  B_k^m= B_k^{m-1}\cap\Bigg(\bigcup_{x\in D_k^m}B\big(x,\tfrac1m\big)\Bigg),
  $$
  it follows that $D_k^m\subset B_k^m$ for all $m\in \N$. Since $l(m)\leq l(m+1)$ and $B_k^m\subset B_k^{m-1}$, we find that
  \[
  D_k^m=D_k^m\cap B_k^m=\{x_1,\ldots,x_{l(m)}\}\cap B_k^m\subset \{x_1,\ldots,x_{l(m+1)}\}\cap B_k^m= D_k^{m+1}
  \]
  for all $m\in \N$. By \eqref{eq.def.sequence}, $$\ps_k^{l(m),m}(x)=\big(1-m\dist(x,D_k^m)\big)\vee 0$$
  for all $x\in \Om$, $k\in \{1,\ldots, m\}$ and $m\in \N$. Hence,
  \[
  \ph_k^m(x)=\prod_{j=k}^m \ps_k^{l(j),j}(x)=0\quad\text{for all }x\in \Om\setminus B_k^m,\; k\in \{1,\ldots, m\}, \text{ and }m\in \N.
  \]
  Moreover, setting $f_0:=f-\de$ and using the fact that $\ph_m^{m-1}=1$, it follows that
\[
U\bigg(f_{m-1}+\sum_{k=1}^{m-1} k\ph_k^{m-1}\bigg)=U\bigg(f_m+\sum_{k=1}^m k\ph_k^{m-1}\bigg)\leq U\bigg(f_m+\sum_{k=1}^m k\ph_k^m\bigg)+\frac{\ep}{2}2^{-m}
\]
  for all $m\in \N$. Inductively, we obtain that
  \[
  U(f-\de)\leq U\bigg(f-\de-\frac{m(m+1)}{2}+\sum_{k=1}^m k\ph_k^m\bigg)+\frac{\ep}{2}\quad\text{for all }m\in \N.
  \]
  We have therefore verified the properties (P1) - (P3), and define
\[
C_k:=\bigcap_{j=k}^\infty\bigcup_{x\in D_k^j} \overline B\big(x,\tfrac1j \big)\quad\text{for all }k\in \N,
\]	
where, for $x\in \Om$ and $r>0$, $\overline B(x,r):=\{y\in \Om\, |\, d(x,y)\leq r\}$.  Then, $C_k$ is a closed and totally bounded subset of a complete metric space, hence compact for all $k\in \N$. Observe that
\[
\bigcap_{m=k}^\infty B_k^m= \bigcap_{m=k}^\infty\bigcap_{j=k}^m \bigcup_{x\in D_k^j} B\big(x,\tfrac1j\big)=\bigcap_{j=k}^\infty \bigcup_{x\in D_k^j} B\big(x,\tfrac1j\big)\subset  C_k\quad\text{for all }k\in \N,
\]
so that, by Property (P1), $D_k^m\subset C_k$ for all $k,m\in \N$ with $k\leq m$.

Using the sequence $(C_k)_{k\in \N}$ of compacts, we define
\[
V_k^1:=\Big\{e\in C_b\, \Big|\, \sup_{x\in C_k}|e(x)|< 2^{-k}\de\Big\}\quad \text{for all }k\in \N
\]
and $V^2:=\{g\in C_b(\Om)\, |\, \|g\|_\infty<1\}$. Then, $$V:=\bigcup_{n\in \N}\sum_{k=1}^n\big(V_k^1\cap kV^2\big)$$
 is a neighborhood of $0\in C_b$ in the mixed topology.
 
 We show that
 \[
  U(f)\leq U(f+e)+\ep\quad\text{for all }e\in V.
 \]
 To that end, let $e\in V$, i.e., there exist $n\in \N$ and $e_k\in V_k^1\cap kV^2$ for $k\in \{1,\ldots,n\}$ such that $e=\sum_{k=1}^n e_k$. Let $\delta_k:=\sup_{x\in C_k}|e_k(x)|$. Since $e_k\in V_k^1$, it follows that $\delta_k<2^{-k}\delta$ for all $k\in \{1,\ldots,n\}$.
 As $C_1,\ldots, C_n$ are compact,
 %Hence, by Lemma \ref{lem.append},
 there exists some $m\in \N$ with $m\geq n$ such that
 \[
  	|e_k(y)-e_k(x)|< 2^{-k}\de-\delta_k
 \]
 for all $k\in \{1,\ldots, n\}$, $x\in C_k$, and $y\in \Om$ with $d(x,y)<\frac1m$.\ Since $D_k^m\subset C_k$ for all $k\in \{1,\ldots,n\}$, this implies that
 \[
 |e_k(y)|\leq 2^{-k}\de\quad \text{for all }\mn{k\in \{1,\ldots, n\}\text{ and }} y\in \bigcup_{x\in D_k^m} B\big(x,\tfrac1m \big).
 \]
 Hence, by Property (P2), it follows that
 \[
 -k+k\ph_k^m\leq -k+(k+e_k)\ph_k^m+2^{-k}\de \leq  e_k+ 2^{-k}\delta\quad\text{for all }k\in \{1,\ldots,n\},
 \]
 where, in the first step, we used the fact that $|e_k\ph_k^m|\leq 2^{-k}\de$ and, in the last step, we used the fact that $k+e_k\geq 0$ for all $k\in \{1,\ldots,n\}$.\ On the other hand, $-k+k\ph_k^m\leq 0$ for all $k\in \{n+1,\ldots, m\}$. Therefore,
  \[
 f-\de-\frac{m(m+1)}{2}+\sum_{k=1}^m k\ph_k^m\leq f-\de+\sum_{k=1}^n\big(e_k+2^{-k}\de\big)\leq f+\sum_{k=1}^ne_k=f+e.
 \]
 Using Property (P3), it follows that
 \[
 U(f)\leq U(f-\de)+\frac\ep2\leq U\bigg(f-\de-\frac{m(m+1)}{2}+\sum_{k=1}^m k\ph_k^m\bigg)+\ep\leq U(f+e)+\ep.
 \]
 This proves that $U$ is lower semicontinuous w.r.t.\ the mixed topology $\beta$.

Next, we prove that (ii) implies (iv). To that end, let $f\in C_b$ and $\ep>0$. Then, there exist $(V_k^1)_{k\in \N}\subset \mathcal V^1$ and $V^2\in \mathcal V^2$ such that
\[
U(f)\leq U(f+e)+\ep
\]
for all $e\in V:=\bigcup_{n\in \N}\sum_{k=1}^n(V_k^1\cap kV^2)$. Let $r\geq0$. Then, there exists some $n\in \N$ such that
\[
 \{g\in C_b\, |\, \|g\|_\infty\leq r\}\subset nV^2.
\]
Moreover, there exist $\de>0$ and a compact $C\subset \Omega$ such that
\[
\Big\{g\in C_b\, \big|\, \sup_{x\in C}|g(x)|\leq\delta\Big\}\subset V_n^1.
\]
Hence, for every $e\in C_b$ with $\|e\|_\infty\leq r$ and $\sup_{x\in C}|e(x)|\leq \delta$, it follows that $e\in V$ and therefore $U(f)\leq U(f+e)+\ep$.

Clearly, (ii) implies (iii), so that it remains to prove the implications (iii)$\,\Rightarrow\,$(i) and (iv)$\,\Rightarrow\,$(i).\ To that end, let $(f_n)_{n\in \N}\subset C_b$ with $f_n\nearrow f\in C_b$ as $n\to \infty$. Then, due to the monotonicity of $U$
\[
\lim_{n\to \infty} U(f_n)=\sup_{n\in \N}U(f_n)\leq U(f).
\]
Since $f_n\nearrow f$ as $n\to \infty$, by Remark \ref{rem.convergenceprop} a), it follows that the sequence $(f_n-f)_{n\in \N}$ is uniformly bounded and converges uniformly on compacts to $0\in C_b$. Moreover, by Remark \ref{rem.convergenceprop} b), it follows that $f_n\to f$ as $n\to \infty$ in the mixed topology $\beta$.

Hence, if $U$ satisfies (iii) or (iv), it follows that
\[
U(f)\leq \liminf_{n\to \infty}U(f_n)=\lim_{n\to \infty}U(f_n).
\]
The proof is complete.}
\end{proof}

\begin{proof}[Proof of Corollary \ref{cor.convex}]
	The equivalence of (i) and (ii) follows from Theorem \ref{thm.main}.\ \mn{By standard duality theory in locally convex Hausdorff spaces, cf.\ \cite{ekeland}, (ii) is equivalent to the fact that $U$ admits a dual representation of the form 
    \[
    U(f)=\sup_{\mu\in \mathcal M}\big(\mu f-\alpha(\mu)\big)\quad\text{for all }f\in C_b
    \]
 with a set $\mathcal M$ of $\beta$-continuous linear functionals on $C_b$ and a function $\alpha\colon \mathcal M\to \R$. Let $\mu\in \mathcal M$ and $f\in C_b$ with $f\geq 0$. Since $U$ is monotone, it follows that
 \[
 \frac1\lambda\big(\mu (-\lambda f)-\alpha(\mu)\big)\leq \frac{U(-\lambda f)}\lambda\leq \frac{U(0)}{\lambda}\quad\text{for all }\lambda>0.
 \]
 Hence,
\[
 0=\lim_{\lambda\to \infty} -\frac{U(0)+\alpha (\mu)}\lambda\leq \mu f,
\] 
which shows that every linear functional in $\mathcal M$ is positive.}
 The remaining equivalences, in particular, the dual representation \eqref{eq.dual} now follow from the fact that, by Theorem \ref{thm.main} and the Daniell-Stone theorem, cf.\ \cite[Theorem 7.8.1]{bogachev}, a positive linear functional $\mu\colon C_b\to \R$ is continuous in the mixed topology $\beta$ if and only if it belongs to $\ca_+$.
\end{proof}

\begin{proof}[Proof of Corollary \ref{cor.choquet}]
	 \mn{We first prove the implication (i)$\,\Rightarrow\,$(ii).  Let $(f_n)_{n\in \N}\subset L_b$ and $f\in L_b$ with $f_n\nearrow f$ as $n\to \infty$. Then, for all $s\in \R$,
	\[
	\lim_{n\to \infty}c(f_n>s)= c(f>s).
	\]
	Using the monotone convergence theorem, it follows that
	\begin{align}
		\notag\lim_{n\to \infty} \int_\Om f_n\, {\rm d}c&=\lim_{n\to \infty} \bigg(\int_0^\infty c(f_n>s)\, {\rm d}s+\int_{-\infty}^0 \big(c(f_n>s)-c(\Om)\big)\,{\rm d}s\bigg)\\
  & =\int_0^\infty c(f>s)\, {\rm d}s+\int_{-\infty}^0 \big(c(f>s)-c(\Om)\big)\,{\rm d}s=\int_\Om f\, {\rm d}c. \label{eq.proof.choquet}
	\end{align}
    For the implication (ii)$\,\Rightarrow\,$(iii), first observe that
    \[
    c(B)\geq \sup_{C\Subset B}\inf_{\substack{A \in \mathcal O\\ A\supset C}}c(A)\quad\text{for all }B\in \mathcal O.
    \]
    In order to show the converse inequality, let $B\in \mathcal O$ and $\ep>0$. In a first step, we consider the case $B=\Om$. Then, by Theorem \ref{thm.main}, there exist $\de>0$ and a compact set $C\subset \Omega$ such that
    \[
    \int_\Om 1\, {\rm d}c\leq \int_\Om g\, {\rm d}c+\ep
    \]
    for all $g\in C_b$ with $\|g\|_\infty\leq 1$ and $\sup_{x\in C} |g(x)-1|\leq \de$. Now, let $A\in \mathcal O$ with $A\supset C$. Since $A$ is open, there exists some $m\in \N$ such that $g\colon \Om\to \R$, given by
    \[
    g(x):=\sup_{y\in C} \big(1- md(x,y)\big)\vee 0\quad\text{for all }x\in \Om,
    \]
    satisfies $g(x)=0$ for $x\in \Om\setminus A$. Since $0\leq g\leq 1$ and $g(x)=1$ for all $x\in C$, it follows that
    \[
    c(\Om)=\int_\Om 1\, {\rm d}c\leq  \int_\Om g\, {\rm d}c+\ep\leq \int_\Om 1_A\, {\rm d}c+\ep= c(A)+\ep.
    \]
    We have therefore shown that
    \[
    c(\Om)\leq \inf_{\substack{A \in \mathcal O\\ A\supset C}}c(A)+\ep
    \]
    Taking the supremum over all $C\Subset \Om$ and letting $\ep\to 0$, Equation \eqref{eq.regularity} follows.

     For general $B\in \mathcal O$, the statement now follows from the fact that $B$, endowed with the subspace topology
    \[
    \mathcal O_B:=\{A\cap B\,|\, A\in \mathcal O\}=\{A\in \mathcal O\,|\, A\subset B\}\subset \mathcal O,
    \]
    is again a Polish space together with the observation that a subset of $B$ is compact in the subspace topology $\mathcal O_B$ if and only if it is compact in the original topology $\mathcal O$.

Next, we prove that (iii) implies (i). To that end, let $(B_n)_{n\in \N}\subset \mathcal O$ with $B_n\subset B_{n+1}$ for all $n\in \N$ and $\ep>0$. Then, there exists some compact $C\subset \bigcup_{n\in \N} B_n=:B$ with
    \[
    c(B)\leq \inf_{\substack{A \in \mathcal O\\ A\supset C}}c(A)+\ep.
    \]
    Since $C$ is compact, $C\subset B=\bigcup_{n\in \N} B_n$, and $B_n$ is open with $B_n\subset B_{n+1}$ for all $n\in \N$, there exists some $n_0\in \N$ such that $C\subset B_{n_0}$. Hence, $$\inf_{\substack{A\in \mathcal O\\ A\supset C}}c(A)\leq c(B_{n_0}),$$ and it follows that
    \[
    \sup_{n\in \N} c(B_n)\leq c(B)\leq \sup_{n\in \N} c(B_n)+\ep=\lim_{n\to \infty} c(B_n)+\ep.
    \]
    Letting $\ep\to 0$, it follows that $c(B)=\lim_{n\to \infty} c(B_n)$.

Now, we assume that the capacity $c$ is $2$-alternating. In order to prove that (iv)$\,\Rightarrow\,$(i), let $(B_n)_{n\in \N}\subset \mathcal O$ with $B_n\subset B_{n+1}$ for all $n\in \N$. Then,
\begin{align*}
c\bigg(\bigcup_{n\in \N} B_n\bigg)=\sup_{\mu\in \mathcal M} \mu\bigg(\bigcup_{n\in \N} B_n\bigg)=\sup_{\mu\in \mathcal M}\sup_{n\in \N} \mu (B_n)=\sup_{n\in \N}\sup_{\mu\in \mathcal M} \mu (B_n)= \lim_{n\to \infty}c(B_n).
\end{align*}
    
	In a last step, we prove that (ii) implies (iv).\ By Corollary \ref{cor.convex}, there exists a set $\mathcal M\subset \ca_+$ and a function $\alpha\colon \mathcal M\to \R$ such that
	\[
	\int_\Omega f\, {\rm d}c=\sup_{\mu\in \mathcal M}\bigg(\int_\Omega f\, {\rm d}\mu-\alpha(\mu)\bigg)\quad\text{for all }f\in C_b.
	\]
    Since the Choquet integral is positively homogeneous, it follows that
    \[
    \int_\Om f\, {\rm d}\mu-\frac{\alpha(\mu)}{\lambda}=\frac1\lambda\bigg(\int_\Om \lambda f\, {\rm d}\mu-\alpha (\mu)\bigg)\leq \frac1\lambda \int_\Om \lambda f\, {\rm d}c=\int_\Om f\, {\rm d}c
    \]
    for all $f\in C_b$, $\mu\in \mathcal M$, and $\lambda>0$. Letting $\lambda\to \infty$, it follows that
    $$
    \sup_{\mu\in \mathcal M}\int_\Om f\, {\rm d}\mu\leq \int_\Om f\, {\rm d}c\quad\text{for all }f\in C_b.
    $$
    On the other hand,
    \[
    -\alpha(\mu)=\int_\Om 0\, {\rm d}\mu-\alpha(\mu)\leq \int_\Omega 0\, {\rm d}c=0\quad \text{for all }\mu\in \mathcal M.
    \]
    Hence,
    \[
    \sup_{\mu\in \mathcal M}\int_\Om f\, {\rm d}\mu\leq \int_\Omega f\, {\rm d}c=\sup_{\mu\in \mathcal M}\bigg(\int_\Omega f\, {\rm d}\mu-\alpha(\mu)\bigg)\leq \sup_{\mu\in \mathcal M}\int_\Omega f\, {\rm d}\mu \quad\text{for all }f\in C_b.
    \]
    In particular,
    $$
    \mu(\Omega)=\int_\Om 1\,{\rm d}\mu\leq \int_\Om 1\,{\rm d}c=c(\Om)\quad\text{for all }\mu\in \mathcal M.
    $$
	Since the Choquet integral is constant additive, it follows that $$0=c(\Om)+\int_\Om (-1)\, {\rm d}c\geq c(\Om)+\int_\Om (-1)\, {\rm d}\mu=c(\Om)-\mu(\Om)$$
 for all $\mu\in \mathcal M$. We have therefore shown that $\mu(\Om)=c(\Om)$ for all $\mu\in \mathcal M$.
 Now, for every $f\in L_b$, there exists a sequence $(f_n)_{n\in \N}\subset C_b$ with $f_n\nearrow f$ as $n\to \infty$,\footnote{Define, for $x\in \Om$ and $n\in\N$, $f_n(x):=\inf_{y\in \Om}\big(f(y)+nd(x,y)\big)$ with a metric $d$ that is consistent with the topology on $\Om$.} so that
	\[
	\int_\Om f\, {\rm d}c=\sup_{n\in \N} \int_\Om f_n\, {\rm d}c= \sup_{n\in \N}\sup_{\mu\in \mathcal M}\int_\Omega f_n\, {\rm d}\mu=\sup_{\mu\in \mathcal M}\sup_{n\in \N}\int_\Omega f_n\, {\rm d}\mu =\sup_{\mu\in \mathcal M}\int_\Omega f\, {\rm d}\mu,
	\]
	where, in the first equality, we used \eqref{eq.proof.choquet} and, in the last equality, we used the monotone convergence theorem.}
\end{proof}	

\begin{proof}[Proof of Theorem \ref{thm.supremum}]
	\mn{First, observe that, by convexity of $U_i$ for all $i\in I$ and \eqref{eq.ass.thm.supremum}, 
 \begin{equation}\label{eq.apriori_ui}
    \sup_{i\in I} \big(U_i(f_1)-U_i(f_2)\big)\leq \sup_{i\in I}\big(U_i(f_1)+U_i(-f_2)-2U_i(0)\big)\leq 2\sup_{i\in I} \big(U_i(r)-U_i(0)\big)<\infty
 \end{equation}
for all $r\geq 0$ and $f_1,f_2\in C_b$ with $\max\big\{\|f_1\|_\infty,\|f_2\|_\infty\big\}\leq r$.}\

The implication (iii)$\,\Rightarrow\,$(i) follows from Remark \ref{rem.convergenceprop} a).\ In order to prove that (i) implies (ii), we first show that, for every $f\in C_b$ and every $\ep >0$, there exists some $V\in \mathcal V$ with 
	\[
	\sup_{i\in I} |U_i(f+e)-U_i(f)|<\ep \quad \text{for all }e\in V.
	\]
	To that end, let $f\in C_b$ and consider the monotone maps $\overline U_f,\underline U_f\colon C_b\to \R$, given by
	\begin{align*}
		\overline U_f(g)&:= \sup_{i\in I} \big(U_i(f+g)- U_i(f)\big)\quad\text{and}\\	
		\underline U_f(g)&:= \sup_{i\in I} \big(U_i(f)-U_i(f-g)\big)\quad\text{for all }g\in C_b.
	\end{align*}
	\mn{Observe that, by \eqref{eq.apriori_ui}, $\overline U_f$ and $\underline U_f$ are well-defined and
 \[
 \underline U_f(-g)= \sup_{i\in I} \big(U_i(f)-U_i(f+g)\big)\quad\text{for all }g\in C_b.
 \]
 Moreover, for any $V\in \mathcal V$, $e\in V$ if and only if $-e\in V$. Hence, }the auxiliary statement follows from Corollary \ref{cor.main}, once we have shown that, both, $\overline U_f$ and $\underline U_f$ are continuous from above.\ To that end, let $g\in C_b$ and $(g_n)_{n\in \N}$ with $g_n\searrow g$ as $n\to \infty$. Moreover, let $\ep>0$, $n\in \N$, and $\la\in (0,1)$. Then, using the fact that $U_i$ is convex for all $i\in I$, 
	\begin{align*}
		\overline U_f(g_n)-\overline U_f(g)&\leq\sup_{i\in I} \big(U_i(f+g_n)-U_i(f+g)\big)\leq \la\sup_{i\in I} \bigg[U_i\bigg(\frac{g_n-g}{\la}\bigg)-U_i(0)\bigg]\\
		&+\la \sup_{i\in I}\big(U_i(0)-U_i(f+g)\big)+(1-\lambda)\sup_{i\in I} \bigg[U_i\bigg(\frac{f+g}{1-\lambda}\bigg)-U_i(f+g)\bigg]
	\end{align*}	
	and
	\begin{align*}
		\underline U_f(g_n)-\underline U_f(g)&\leq \sup_{i\in I} \big(U_i(f-g)-U_i(f-g_n)\big)\leq \sup_{i\in I} \big(U_i(f-2g+g_n)-U_i(f-g)\big)\\
		&\leq  \la\sup_{i\in I} \bigg[U_i\bigg(\frac{g_n-g}{\la}\bigg)-U_i(0)\bigg]+\la \sup_{i\in I}\big(U_i(0)-U_i(f-g)\big)\\
		&+(1-\lambda)\sup_{i\in I} \bigg[U_i\bigg(\frac{f-g}{1-\lambda}\bigg)-U_i(f-g)\bigg]. 	
	\end{align*}
	Since the maps $$\mn{\R\to \R,\; \gamma\mapsto \sup_{i\in I}\Big(U_i\big(\gamma (f\pm g)\big)-U_i(f\pm g)\Big)}$$ are convex and therefore continuous,
	\[
	\la \sup_{i\in I}\big(U_i(0)-U_i(f\pm g)\big)+(1-\lambda)\sup_{i\in I} \bigg[U_i\bigg(\frac{f\pm g}{1-\lambda}\bigg)-U_i(f\pm g)\bigg]<\frac\ep2
	\]
	for $\lambda\in (0,1)$ sufficiently small. Moreover, by assumption,
	\[
	\la\sup_{i\in I} \bigg[U_i\bigg(\frac{g_n-g}{\la}\bigg)-U_i(0)\bigg]<\frac\ep2
	\]
	for $n\in \N$ sufficiently large, since $\frac{g_n-g}{\lambda}\searrow 0$ as $n\to \infty$. We have therefore shown that
	\[
	0\leq \overline U_f(g_n)-\overline U_f(g)<\ep \quad \text{and}\quad 0\leq \underline U_f(g_n)-\underline U_f(g)<\ep
	\]
	for $n\in \N$ sufficiently large, so that both $\overline U_f$ and  $\underline U_f$ are continuous from above. We have therefore proved the auxiliary statement and are now ready to prove the implication (i)$\,\Rightarrow\,$(ii). For $i\in I$ and $f\in C_b$, let $U_{i,f}\colon C_b\to \R$ be given by
	\[
	U_{i,f}(g):= U_i(f+g)-U_i(f)\quad\text{for all }g\in C_b. 
	\]
	Then, $U_{i,f}$ is convex and monotone with $U_{i,f}(0)=0$ for all $i\in I$ and $f\in C_b$. Moreover, for all $\lambda\in (0,1)$, $i\in I$, and $f,g\in C_b$,
	\begin{align*}
	U_{i,f}(g)&\leq \lambda \Bigg(U_i\bigg(\frac{g}{\lambda}\bigg)-U_i(f)\Bigg)+(1-\lambda)\Bigg(U_i\bigg(\frac{f}{1-\lambda}\bigg)-U_i(f)\Bigg)\\
& \leq \lambda \Bigg(U_i\bigg(\frac{g}{\lambda}\bigg)-U_i(0)\Bigg)+\lambda \big(U_i(-f)-U_i(0)\big)\\
&\quad +(1-\lambda)\Bigg(U_i\bigg(\frac{f}{1-\lambda}\bigg)-U_i(f)\Bigg),
	\end{align*}
\mn{where, in the second inequality, we used the fact that $U_i(0)-U_i(f)\leq U_i(-f)-U_i(0)$ for all $i\in I$.\ Let $r\geq 0$, $(g_n)_{n\in \N}\subset C_b$ with $g_n\searrow 0$ as $n\to \infty$, and $\ep>0$.\ Then, by \eqref{eq.apriori_ui}, the map
\[
\R\to \R,\; \gamma\mapsto \sup_{\|f\|_\infty\leq r}\sup_{i\in I} \big(U_i(\gamma f)-U_i(f)\big)
\]
 is convex and well-defined. Therefore, it is continuous and it follows that
 \[
 \lambda \big(U_i(-f)-U_i(0)\big) +(1-\lambda)\Bigg(U_i\bigg(\frac{f}{1-\lambda}\bigg)-U_i(f)\Bigg)< \frac\ep2
 \]
 for $\lambda\in (0,1)$ sufficiently small. Hence,
\[
\sup_{\|f\|_\infty\leq r}\sup_{i\in I}U_{i,f}(g_n)\leq \lambda \sup_{i\in I}\Bigg(U_i\bigg(\frac{g_n}{\lambda}\bigg)-U_i(0)\Bigg)+\frac\ep2<\ep
\]
for $n\in \N$ sufficiently large, and we have shown that
	\[
	\lim_{n\to \infty}\sup_{\|f\|_\infty\leq r}\sup_{i\in I}U_{i,f}(g_n)=0.
	\]
	Using the auxiliary statement for the convex monotone functions $U_{i,f}$ with $i\in I$ and $f\in C_b$ with $\|f\|_\infty\leq r$}, there exists some $V\in \mathcal V$ such that
	\[
	\sup_{\|f\|_\infty\leq r}\sup_{i\in I}|U_i(f+e)-U_i(f)|=\sup_{\|f\|_\infty\leq r}\sup_{i\in I}|U_{i,f}(e)|\leq \ep\quad\text{for all }e\in V.
	\]
	It remains to prove the implication (ii)$\,\Rightarrow\,$(iii). Let $r\geq 0$ and $\ep>0$.  Then, there exists some $V\in \mathcal V$ such that
	\[
	\sup_{\|f\|_\infty\leq r}\sup_{i\in I}|U_i(f+e)-U_i(f)|\leq \ep\quad\text{for all }e\in V.
	\]
	By definition of the local base $\mathcal V$, there exists some compact set $C\subset \Omega$ and some $\delta>0$ such that
	\[
	 \Big\{e\in C_b\,\Big|\, \sup_{x\in C} |e(x)|< \delta\Big\} \cap \{e\in C_b\, |\, \|e\|_\infty\leq 2r\}\subset V.
	\]
	Now, let $f_1,f_2\in C_b$ with $\max\big\{\|f_1\|_\infty,\|f_2\|_\infty\big\}\leq r$. \mn{Then, by the triangle inequality, $\|f_1-f_2\|_\infty\leq 2r$.} If $\sup_{x\in C} |f_1(x)-f_2(x)|< \delta$, then
	\[
	\sup_{i\in I}|U_i(f_1)-U_i(f_2)|\leq \ep.
	\]
	On the other hand, if $\sup_{x\in C} |f_1(x)-f_2(x)|\geq \delta$, \mn{then, by \eqref{eq.apriori_ui},} it follows that
	\[
	\sup_{i\in I}|U_i(f_1)-U_i(f_2)|\leq M\sup_{x\in C} |f_1(x)-f_2(x)|
	\]
	with $M:=\frac{2}{\de}\sup_{i\in I}\big(U_i(r)-\mn{U_i(0)}\big)$. The proof is complete.
\end{proof}	

\begin{proof}[Proof of Corollary \ref{thm.lcs}]
	Since $(L,\tau)$ is a locally convex vector lattice, (iv) implies (i). The implication (i)$\,\Rightarrow\,$(iii) is trivial and, by Corollary \ref{cor.main}, (ii) and (iii) are equivalent. By Theorem \ref{thm.supremum} and Lemma \ref{lem.lcs}, (ii)$\,\Rightarrow\,$(iv) and (ii)$\,\Rightarrow\,$(v), since (ii) together with Dini's lemma implies that, for every convex and weak$^*$ compact set $K$ of nonnegative $\tau$-continuous linear functionals, the map
	\[
	U_K\colon C_b\to \R, \;f\mapsto \sup_{\mu\in K} \mu\big(U(f)-U(0)\big)
	\]
	is convex and satisfies $\lim_{n\to \infty}U_K(f_n)= 0$ for every sequence $(f_n)_{n\in \N}\subset C_b$ with $f_n\searrow 0$ as $n\to \infty$. Since, for every nonnegative $\tau$-continuous linear functional $\lambda\colon L\to \R$, there exists a $\tau$-continuous lattice seminorm $p\colon L\to [0,\infty)$ with $|\la u|\leq p(u)$ for all $u\in L$, (v) implies (ii) by Dini's lemma.
\end{proof}

\appendix

\section{Capacities and Choquet integrals}\label{app.capacity}
The setup and notation in this section follows the one of the main part. \mn{The following lemma is sort of a folklore theorem, cf.\ \cite[Properties 11.8 and Theorem 11.11]{MR1633615}.\ For the reader's convenience, we nevertheless provide a short proof.} 

\begin{lemma}
Let $c\colon \mathcal O\to [0,\infty)$ be a capacity.\ Then, the following statements are equivalent.
\begin{enumerate}
	\item[(i)] For all $B_1,B_2\in \mathcal O$,
	\[
	c(B_1\cup B_2)+c(B_1\cap B_2)\leq c(B_1)+c(B_2).
	\]
	\item[(ii)] For all $f_1,f_2\in L_b$,
	\[
	\int_\Om (f_1+f_2)\, {\rm d}c\leq \int_\Om f_1\, {\rm d}c+\int_\Om f_2\, {\rm d}c.
	\]
\end{enumerate}		
\end{lemma}	

\begin{proof}
	We first prove the implication (ii)$\, \Rightarrow\,$(i). To that end, let $B_1,B_2\in \mathcal O$. Then,
	\begin{align*}
	 c(B_1\cup B_2)+c(B_1\cap B_2)&=\int_\Om (1_{B_1}+1_{B_2})\, {\rm d}c\leq  \int_\Om 1_{B_1}\, {\rm d}c+\int_\Om 1_{B_2}\, {\rm d}c\\
	 &=c(B_1)+c(B_2).
	\end{align*}
	We proceed with the proof of the implication (i)$\, \Rightarrow\,$(ii). In a first step, we prove by an induction over $n\in \N$ that
	\begin{equation}\label{eq.induction}
	\int_\Om\sum_{i=1}^n 1_{B_i}\,{\rm d}c\leq \sum_{i=1}^nc(B_i)\quad\text{for all }B_1,\ldots B_n\in \mathcal O.
	\end{equation}
	For $n=1$, the statement is trivial. Therefore, assume that \eqref{eq.induction} is proved for some $n\in \N$, and let $B_1,\ldots B_{n+1}\in \mathcal O$. Then,
	\[
	\sum_{k=1}^{n+1} 1_{B_i}=1_{\bigcup_{i=1}^{n+1}B_i}+\sum_{k=1}^n 1_{\left(\bigcup_{i=1}^kB_i\right)\cap B_{k+1}}.
	\]
	Using \eqref{eq.induction} and (i), we obtain that
	\begin{align*}
    	\int_\Om\sum_{i=1}^{n+1} 1_{B_i}\,{\rm d}c&=c\bigg(\bigcup_{i=1}^{n+1}B_i\bigg)+\int_\Om	\sum_{k=1}^n 1_{\left(\bigcup_{i=1}^kB_i\right)\cap B_{k+1}}\, {\rm d}c\\
    	&\leq c\bigg(\bigcup_{i=1}^{n+1}B_i\bigg)+ \sum_{k=1}^n c\Bigg(\bigg(\bigcup_{i=1}^kB_i\bigg)\cap B_{k+1}\Bigg)\\
    	&\leq \sum_{i=1}^{n+1}c(B_i).
	\end{align*}
	Now, let $f_1,f_2\in L_b$. Since the Choquet integral is constant additive, we may w.l.o.g.\ assume that $f_1\geq 0$ and $f_2\geq 0$. Let $r:=\max\big\{\|f_1\|_\infty,\|f_2\|_{\infty}\}\leq r$. For $i=1,2$, $n\in \N$, and $k=1,\ldots, 2^n$, define $B^k_{i,n}:=\{f_k>k2^{-n}r\}$. Then, for $i=1,2$,
	\[
	\bigg\|f_i-\sum_{k=1}^{2^n}2^{-n}r1_{B^k_{i,n}}\bigg\|_\infty\leq 2^{-n}r\to 0\quad \text{as }n\to \infty.
	\]
	Since, by positive homogeneity of the Choquet integral and \eqref{eq.induction},
	\begin{align*}
	\int_\Om \sum_{k=1}^{2^n}2^{-n}r\big(1_{B^k_{1,n}}+1_{B^k_{2,n}}\big)\, {\rm d}c&= 2^{-n}r\int_\Om \sum_{k=1}^{2^n}\big(1_{B^k_{1,n}}+1_{B^k_{2,n}}\big)\, {\rm d}c\\
	&\leq  2^{-n}r\sum_{k=1}^{2^n} \Big(c\big(B^k_{1,n}\big)+c\big(B^k_{2,n}\big)\Big)\\
	&=	\int_\Om \sum_{k=1}^{2^n}2^{-n}r1_{B^k_{1,n}}\, {\rm d}c+	\int_\Om \sum_{k=1}^{2^n}2^{-n}r1_{B^k_{2,n}}\, {\rm d}c,
	\end{align*}
 it follows that
 \[
  \int_\Om (f_1+f_2)\, {\rm d}c\leq \int_\Om f_1\, {\rm d}c+\int_\Om f_2\, {\rm d}c.
 \]
\end{proof}	

\section{Locally convex vector lattices}\label{app.lcs}

Thoughout this section, let $(L,\tau)$ be a locally convex vector lattice, i.e., a vector lattice $L$ together with a locally convex topology $\tau$ on $L$, which is generated by a family of lattice seminorms. Let $L_+:=\{u\in L\, |\, u\geq 0\}$ and $L'$ be the topological dual space of $L$, i.e., the space of all continuous linear functionals $L\to \R$. Moreover, let
$$L_+':=\big\{\lambda\in L'\, \big|\, \forall u\in L_+\colon \lambda u\geq 0\big\}$$
be the set of all positive continuous linear functionals on $L$. For $u\in L$, we use the standard notation $u_+:=u\vee 0$ and $u_-:=-(u\wedge 0)$. Then, $u=u_+-u_-$ and $|u|:=u_++u_-$ for all $u\in L$. \mn{The following lemma can be deduced from \cite[Theorem 8.24 and Corollary 8.25]{MR1717083} together with the fact that every linear functional that is bounded by a lattice seminorm is order bounded. For the sake of a self-contained exposition, we provide a short proof.}

\begin{lemma}\label{lem.lcs}
	Let $p\colon L\to [0,\infty)$ be a continuous lattice seminorm.
	\begin{enumerate}
    \item[a)]For every $\lambda\in L'$ with $|\la u|\leq p(u)$ for all $u\in L$, there exists $\lambda_+,\lambda_-\in L_+'$ with
      \begin{equation}\label{eq.functional1}
	   \lambda u=\lambda_+u-\lambda_-u\quad\text{and}\quad\max\big\{\la_+|u|,\la_-|u|\}\leq p(u)\quad\text{for all }u\in L.
      \end{equation}
	\item[b)] There exists a convex and weak$^*$ compact set $K\subset L'_+$ with
	\[
	\max_{\mu\in K}|\mu u|\leq \max_{\mu\in K} \mu |u|=p(u)\leq 2\max_{\mu\in K} |\mu u|\quad\text{for all }u\in L.
	\]
	\end{enumerate}	
\end{lemma}	

\begin{proof}
	We start with the proof of part a). Since $p$ is a lattice seminorm, it follows that $p(u)=p(v)$ for all $u,v\in L$ with $|u|=|v|$.\ In particular, 
		\begin{equation}\label{eq.seminorm1}
		p(u)=p(|u|)\quad\text{for all }u\in L.
	\end{equation}
	\mn{Let $\lambda\in L'$ with $|\la u|\leq p(u)$ for all $u\in L$, and define 
	\[
	\lambda_+ u:=\sup\big\{\lambda v\, \big|\, v\in L_+,\, v\leq u\big\}\quad\text{for all }u\in L_+.
	\]
	}Since $p$ is a lattice seminorm, $0\leq \lambda_+ u\leq p(u)$ and $\lambda_+(\alpha u)=\alpha \lambda_+u$ for all $u\in L_+$ and $\alpha\geq 0$. Let $u_1,u_2\in L_+$. Then, for $v_1,v_2\in L_+$ with $v_1\leq u_1$ and $v_2\leq u_2$,
	\[
	\lambda v_1+\lambda v_2\leq \lambda (v_1+v_2)\leq \lambda_+(u_1+u_2).
	\] 
	Hence, $\lambda_+ u_1+\lambda_+u_2\leq \lambda_+(u_1+u_2)$. On the other hand, for $v\in L_+$ with $v\leq u_1+u_2$, \mn{let
    $$v_1:=(v-u_2)^+\geq 0 \quad \text{and} \quad v_2:=v-v_1=v+(u_2-v)\wedge 0=v\wedge u_2\leq u_2.$$
    Moreover, $v_1\leq u_1$ since $u_1\geq 0$ and $v_2= v\wedge u_2\geq 0$ since $v\geq 0$ and $u_2\geq0$.} Hence,
	\[
	\lambda v= \la v_1+\la \mn{v}_2\leq \lambda_+u_1+\lambda_+u_2.
	\]
	We have therefore shown that $\lambda_+(u_1+u_2)=\lambda_+ u_1+\lambda_+u_2$. For $u\in L$, define
	\[
	\lambda_+ u:=\lambda_+ u_+-\lambda_+ u_-.
	\]
	\mn{Let $u,v\in L$. Then, 
    \[
    (u+v)_+-(u+v)_-=u+v=u_+-u_-+v_+-v_-,
    \]
    so that
    \begin{align*}
    \lambda_+(u+v)_++\lambda_+u_-+\lambda_+v_-&=\lambda_+ \big((u+v)_++u_-+v_-\big)=\lambda_+ \big((u+v)_-+u_++v_+\big)\\
    &=\lambda_+(u+v)_-+\lambda_+u_++\lambda_+v_+,
    \end{align*}
    which implies that $\lambda_+(u+v)=\lambda_+u+\lambda_+v$. Moreover, for all $u\in L$ and $\al>0$,
    \[
    \lambda_+(\alpha u)=\lambda_+(\alpha u)_+-\lambda_+(\alpha u)_-=\lambda_+( \alpha u_+)-\lambda_+( \alpha u_-)=\alpha \lambda_+ u.
    \]
    }Since, by definition, $(-u)_+=-u_-$ and $(-u)_-=-u_+$ for all $u\in L$ it follows that $\lambda_+\colon L\to \R$ is linear and, by \eqref{eq.seminorm1},
	\[
	|\lambda_+ u|\leq \lambda_+ |u|\leq p(|u|)=p(u),
	\]
	which implies that $\la_+\colon L\to \R$ is continuous. Now, defining $\la_-u:=\la_+ u-\la u$ for all $u\in L$, we find that
	\begin{align*}
	\lambda_- u=\sup\big\{\lambda (v-u)\, |\, v\in L_+,\, v\leq u\big\}=\sup\big\{\lambda (-w)\, \big|\, w\in L_+,\, w\leq u\big\}=(-\lambda)_+ u
	\end{align*}
	\mn{for all $u\in L_+$. Using the fact that $\lambda_-$ is linear and replacing $\lambda$ by $-\lambda$, it follows that
    \[
    |\lambda_-u|\leq \lambda_-|u|=(-\lambda)_+|u|\leq p(u).    
    \]
    In particular, $\lambda_-$ is continuous, and the proof of part a) is complete.}
	
	In order to prove part b), let $V:=\{u\in L\, |\, p(u)\leq 1\}$.\ By the Banach-Alaoglu theorem, the set $V^\circ:=\{\lambda\in L'\, |\, \forall u\in V\colon \lambda u\leq 1\}$ is  convex and weak$^*$ compact.\ By \eqref{eq.seminorm1} and the Hahn-Banach theorem,
	\[
	p(u)=\max_{\lambda\in V^\circ} |\lambda u|\quad\text{for all }u\in L.
	\]
	Since the set $L_+':=\bigcap_{u\in L_+}\{\la\in L'\, |\, \la u\geq 0\}$ is convex and weak$^*$ closed, it follows that $K:=V^\circ\cap L_+$ is convex and weak$^*$ compact. By \eqref{eq.seminorm1},
    \[
     \max_{\mu\in K}|\mu u|\leq \max_{\mu\in K}\mu|u|\leq p(|u|)=p(u)\quad\text{for all }u\in L.
    \]
	By \eqref{eq.functional1}, for all $\la\in V^\circ$ and $u\in L$,
	\[
	-\la_-|u|\leq \la |u|\leq \la_+ |u| \quad\text{and}\quad \max\big\{\la_+|u|,\la_-|u|\}\leq p(u).
	\]
	This shows that $\la_+,\la_-\in K$ and $\big| \la |u|\big|\leq\max\big\{ \la_+ |u|,\la_-|u|\big\}$ for all $\la\in V^\circ$.\ Hence, by \eqref{eq.seminorm1},
	\[
	p(u)=p(|u|)=\max_{\lambda\in V^\circ} \big|\lambda |u|\big|\leq \sup_{\la\in V^\circ}\max\{\la_+ |u|,\la_-|u|\}\leq \max_{\mu\in K}\mu |u|.
	\]
	We have therefore shown that $p(u)=\max_{\mu\in K}\mu|u|$ for all $u\in L$. On the other hand,
	\[
	p(u)=\max_{\la\in V^\circ}|\la u|\leq \max_{\la\in V^\circ}\big(|\la_+ u|+|\la_- u|\big)\leq 2\max_{\mu\in K}|\mu u|.
	\]
\end{proof}

%	\bibliographystyle{abbrv}
%\bibliography{literature}

\end{document}